\definecolor{backgrey}{rgb}{0.86,0.86,0.86}
\definecolor{dblue}{rgb}{0,0.0,0.5}
\definecolor{dred}{rgb}{0.4,0.2,0}
\definecolor{dgreen}{rgb}{0.0,0.5,0}
\newcommand{\captionfonts}{\small}
\long\def\@makecaption#1#2{%
  \vskip\abovecaptionskip
  \sbox\@tempboxa{{\captionfonts #1: #2}}%
  \ifdim \wd\@tempboxa >\hsize
    {\captionfonts #1: #2\par}
  \else
    \hbox to\hsize{\hfil\box\@tempboxa\hfil}%
  \fi
  \vskip\belowcaptionskip}
\newtheorem{theorem}{Theorem}
\newtheorem{proof}{Proof}
\newtheorem{definition}{Definition}
\title{\LARGE \bf An Information Identity for State-dependent Channels with Feedback}
\author{\quad Nicolas Limal
}
\begin{document}
\maketitle \thispagestyle{empty} \pagestyle{plain}
\begin{abstract}
In this technical note, we investigate information quantities of state-dependent communication channels with corrupted information fed back from the receiver. We derive an information identity which can be interpreted as a law of conservation of information flows.
\end{abstract}
%\begin{keywords}
%\normalfont \normalsize
%Noisy feedback, information flow, directed information, discrete memoryless channel, coding, capacity.
%\end{keywords}

\section{Introduction}

\indent For decades, communication channels with feedback has been attracted much attention from researcher \cite{Shannon58,Schalkwijk66_2,Cover88, bookhan03,Kim08_capacity_fb, Kim10, Ofer07,Tati09,Permuter09}. When feedback is used, the notion of directed information introduced by Massey \cite{Massey1990} has been playing an important role. Based on this notion, lots of notable results have been derived for feedback systems \cite{Kim08_capacity_fb, Kim10,Ofer07,Tati09,Permuter09}.
The directed information from a random sequence $x^i$ to a random sequence $y^i$ is defined as
\begin{equation}
I(x^n\rightarrow y^n) = \sum_{i=1}^n I(x^i, y_i | y^{i-1}).
\end{equation}

However, the merit of this notion is restricted in the case of noiseless feedback. Namely, when feedback is corrupted by noise, this notion is not as useful as that in noiseless feedback. For example, directed information can be used to characterize capacity of certain communication channels with feedback while it fails for the case of noisy feedback \cite{chong_isit11}. Therefore, very few work on noisy feedback systems can be found in the literature\cite{Omura68,Lavenberg71,Draper06,Kim07,Chong11_isit_bounds,Burnashev08,Chong11_allerton_UpperBound,Martins08, Chong11_allerton_finiteCapacity, Chance10, Chong12_allerton_sideInfo}.
Motivated by this aforementioned failure, we wish to understand why the directed information does not apply. In this note, we focus on a very generic state-dependent communication channel with noisy feedback. Specifically, the forward channel is
\begin{equation*}
p(y_i| y^{i-1}, x^i, s^i)
\end{equation*}
where $x_i, y_i$ are respectively channel input and output at time instant $i$. $x^i$ represents the sequence $x_1, x_2, \cdots, x_i$. This probabilistic channel model indicates that the i-th channel output depends on the current channel input and state and all the previous channel inputs, channel outputs and channel states. Here the state is evolving according to $p(s_i|s^{i-1})$. We assume that the channel state is causally known by the encoder. Specifically, the channel input $x_i$ is determined by the message index $x_0$, feedback information $e^{i-1}$ and previous channel inputs $x^{i-1}$. The feedback channel is
\begin{equation*}
p(e_i| e^{i-1}, y^i)
\end{equation*}
where the current channel output $e_i$ depends on the current feedback input $y_i$ and all the previous feedback inputs and outputs. Here we assume the forward channel outputs are fed back without any encoding. It is worth noting that this work can be easily extended to the case that the forward channel outputs are processed before being fed back. \\
\indent In what follows, we derive an information identity which can be used to explain the failure of using directed information to characterize the capacity. This information identity can be interpreted as a law of conservation of information flows.

\section{An Information Identity on Noisy Feedback Channels}

\indent In this section, we derive an information identity, which can be interpreted as a law of conservation of information flows. First of all, we provide some necessary definitions as follows.
\begin{definition}
Given random sequences $y^n$ and $s^n$, the entropy of $y^n$ causally conditioning on $s^n$ is defined as
\begin{equation*}
H(y^n||s^n) = \sum_{i=1}^n H(y_i|y^{i-1}, s^i).
\end{equation*}
\end{definition}
Furthermore, if we have both conventional conditioning and causal conditioning, the definition is give as follows.
\begin{definition}
Given random sequences $y^n$, $x^n$ and $s^n$, the entropy of $y^n$ conditioning on $x^n$ and in the meanwhile causally conditioning on $s^n$ is defined as
\begin{equation*}
H(y^n|x^n|| s^n) = \sum_{i=1}^n H(y_i|y^{i-1},x^n, s^i).
\end{equation*}
\end{definition}

Based on the aforementioned entropy definition, we next have the causal conditioning mutual information as follows.
\begin{definition}
Given random sequences $x^n$, $y^n$ and $s^n$, the mutual information between $x^n$ and $y^n$ causally conditioning on $s^n$ is defined as
\begin{equation*}
I(x^n;y^n||s^n) = H(y^n||s^n)-H(y^n|x^n || s^n)
\end{equation*}
\end{definition}
Furthermore, we have the extended form of mutual information below.
\begin{definition}
Given random sequences $x^n$, $y^n$, $z^n$ and $s^n$, the mutual information between $x^n$ and $y^n$ with conditioning on $z^n$ and causally conditioning on $s^n$ is defined as
\begin{equation*}
I(x^n;y^n|z^n ||s^n) = H(y^n|z^n ||s^n)-H(y^n|x^n, z^n|| s^n)
\end{equation*}
\end{definition}

Finally, we need to define the causal conditioning directed information \cite{chong_arXiv11}. 
\begin{definition}
Given random sequences $x^n$, $y^n$ and $s^n$, the directed information from $x^n$ to $y^n$ causal conditioning on $s^n$ is defined as
\begin{equation*}
I(x^n \rightarrow y^n||s^n) = \sum_{i=1}^{n}I(x^i;y_i|y^{i-1}, s^i).
\end{equation*}
\end{definition}

Now we are ready to show our main result. Recall that the channel states are causally known by the transmitter, therefore, the actual information delivered from the transmitter to the receiver is captured by the quantity $I(x_0;y^n||s^n)$, where $x_0$ represents the message index and $y^n$ represents the received information by the receiver.

\begin{theorem}
In the aforementioned noisy feedback system, it holds that
\begin{equation*}
I(x^n\rightarrow y^n||s^n) = I(x_0;y^n||s^n) + I(e^{n-1};x_0|y^n||s^n)+ I(e^{n-1}\rightarrow y^n|| s^n)
\end{equation*}
\end{theorem}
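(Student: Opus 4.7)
The strategy is to rewrite each summand of $I(x^n\to y^n||s^n)$ with $x^i$ replaced by the pair $(x_0,e^{i-1})$, split the resulting quantity by the chain rule, and then massage the residual sum with Kolmogorov's identity together with the Markov structure of the system to recover the remaining two terms on the right hand side.

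First, the encoding rule $x_i=f_i(x_0,e^{i-1},x^{i-1})$ makes $x^i$ a deterministic function of $(x_0,e^{i-1})$, while the forward channel law $p(y_i|y^{i-1},x^i,s^i)$ makes $y_i$ conditionally independent of $(x_0,e^{i-1})$ given $(x^i,y^{i-1},s^i)$. Together these give
\begin{equation*}
I(x^i;y_i|y^{i-1},s^i) = I(x_0,e^{i-1};y_i|y^{i-1},s^i).
\end{equation*}
Applying the chain rule $I(x_0,e^{i-1};y_i|\cdot)=I(e^{i-1};y_i|\cdot)+I(x_0;y_i|e^{i-1},\cdot)$ and summing over $i$ yields
\begin{equation*}
I(x^n\to y^n||s^n)=I(e^{n-1}\to y^n||s^n)+\sum_{i=1}^{n}I(x_0;y_i|e^{i-1},y^{i-1},s^i).
\end{equation*}

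It remains to show that the residual sum equals $I(x_0;y^n||s^n)+I(e^{n-1};x_0|y^n||s^n)$. Kolmogorov's identity rewrites each summand as
\begin{equation*}
I(x_0;y_i|e^{i-1},y^{i-1},s^i) = I(x_0;y_i|y^{i-1},s^i) + I(x_0;e^{i-1}|y^i,s^i) - I(x_0;e^{i-1}|y^{i-1},s^i).
\end{equation*}
The first piece sums exactly to $I(x_0;y^n||s^n)$, and one is left with a difference that must be telescoped to the quantity $I(x_0;e^{n-1}|y^n,s^n)$, which, upon unwinding the causal-conditioning definitions, coincides with $I(e^{n-1};x_0|y^n||s^n)$.

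Two Markov reductions make the telescope work. The state recursion $p(s_i|s^{i-1})$ forces $s_i \perp (x_0,e^{i-1},y^{i-1}) \mid s^{i-1}$, which gives $I(x_0;e^{i-1}|y^{i-1},s^i)=I(x_0;e^{i-1}|y^{i-1},s^{i-1})$. The feedback law $p(e_i|e^{i-1},y^i)$ forces $e_i \perp x_0 \mid (e^{i-1},y^i)$, which gives $I(x_0;e^i|y^i,s^i)=I(x_0;e^{i-1}|y^i,s^i)$ and in particular $I(x_0;e^n|y^n,s^n)=I(x_0;e^{n-1}|y^n,s^n)$. Setting $T_i:=I(x_0;e^i|y^i,s^i)$ and substituting, the residual collapses to $\sum_{i=1}^n(T_i-T_{i-1})=T_n-T_0=I(x_0;e^{n-1}|y^n,s^n)$, and reassembling the pieces produces the claimed identity. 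The main obstacle I anticipate is exactly this telescoping step: the raw residual does not telescope on its face, and making it do so requires aligning the time indices on both $s$ and $e$ simultaneously by invoking the state and feedback Markov properties in tandem.
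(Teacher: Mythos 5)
Your proof is correct, and it takes a genuinely different route from the paper's. The paper works in two stages: it first uses the forward-channel Markov chain $x_0-(y^{i-1},x^i,s^i)-y_i$ to obtain $I(x_0;y^n||s^n)=I(x^n\rightarrow y^n||s^n)-I(x^n\rightarrow y^n||x_0,s^n)$, and then, in a separate and fairly long cancellation computation, shows that the auxiliary quantity $I(x^n\rightarrow y^n||x_0,s^n)$ (the directed information causally conditioned on the message) equals $I(e^{n-1};x_0|y^n||s^n)+I(e^{n-1}\rightarrow y^n||s^n)$, taking the chain rule $I(e^{n-1};x_0|y^n||s^n)=I(e^{n-1},y^n;x_0||s^n)-I(y^n;x_0||s^n)$ as the operative meaning of the cross term. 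You bypass the auxiliary quantity entirely: the substitution $I(x^i;y_i|y^{i-1},s^i)=I(x_0,e^{i-1};y_i|y^{i-1},s^i)$ (valid because $x^i$ is a deterministic function of $(x_0,e^{i-1},s^i)$ and the channel forgets $(x_0,e^{i-1})$ given $(x^i,y^{i-1},s^i)$) plus one application of the chain rule peels off $I(e^{n-1}\rightarrow y^n||s^n)$ in a single pass, and Kolmogorov's identity plus a telescoping sum produces the remaining two terms. This buys you two things: the cross term emerges as the ordinary conditional mutual information $I(x_0;e^{n-1}|y^n,s^n)$, a cleaner closed form than the paper's causal-conditioning notation (whose Definition 4 does not literally apply when one argument is the scalar $x_0$), and the role of the state recursion $p(s_i|s^{i-1})$ in aligning $s^i$ with $s^{i-1}$ becomes explicit, a point the paper passes over silently. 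The only place you are terse is the closing identification of $I(x_0;e^{n-1}|y^n,s^n)$ with the paper's $I(e^{n-1};x_0|y^n||s^n)$; this does hold, but it deserves a sentence, since it uses the same state Markov reduction once more to show $I(y^n;x_0||s^n)=I(y^n;x_0|s^n)$ and $I(e^{n-1},y^n;x_0||s^n)=I(e^{n-1},y^n;x_0|s^n)$, after which the ordinary chain rule gives the coincidence.
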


\begin{proof}
First of all, 
\begin{equation*}
\begin{split}
&I(x_0;y^n||s^n)\\
=&H(y^n||s^n)-H(y^n||x_0, s^n)\\
=&\sum_{i=1}^{n}H(y_i|y^{i-1},s^i)-\sum_{i=1}^{n}H(y_i|y^{i-1},x_0, s^i)\\
=&\sum_{i=1}^{n}H(y_i|y^{i-1},s^i)-\sum_{i=1}^{n}H(y_i|y^{i-1},x_0,x^i, s^i)-(\sum_{i=1}^{n}H(y_i|y^{i-1},x_0, s^i)-\sum_{i=1}^{n}H(y_i|y^{i-1},x_0,x^i, s^i)\\
\end{split}
\end{equation*}
Because the forward channel is characterized as $p(y_i|y^{i-1}, x^i, s^i)$, we have Markov chain $x_0 - (y^{i-1}, x^i, s^i) - y_i$. Therefore, we have

\begin{equation*}
\begin{split}
&I(x_0;y^n||s^n)\\
=&\sum_{i=1}^{n}H(y_i|y^{i-1},s^i)-\sum_{i=1}^{n}H(y_i|y^{i-1},x^i, s^i)-(\sum_{i=1}^{n}H(y_i|y^{i-1},x_0, s^i)-\sum_{i=1}^{n}H(y_i|y^{i-1},x_0,x^i, s^i) \\
=&\sum_{i=1}^{n}I(x^i;y_i|y^{i-1}, s^i)-\sum_{i=1}^{n}I(x^i;y_i|y^{i-1},x_0, s^i)\\
=&I(x^n\rightarrow y^n||s^n)-I(x^n\rightarrow y^n||x_0, s^n) \\
\end{split}
\end{equation*}

Next, we decompose the quantity $I(x^n\rightarrow y^n||x_0, s^n)$ as $I(e^{n-1};x_0|y^n||s^n)+ I(e^{n-1}\rightarrow y^n|| s^n)$. In what follows, we provide the detailed derivation. Firstly, 
\begin{equation*}
\begin{split}
& I(e^{n-1}, y^{n}; x_0||s^n) - I(y^n;x_0||s^n)+ I(e^{n-1}\rightarrow y^n|| s^n)\\
=& \sum_{i=1}^{n} I(e_{i-1}, y_{i}; x_0|e^{i-2}, y^{i-1}, s^i) - I(y^n;x_0||s^n)+ I(e^{n-1}\rightarrow y^n|| s^n)\\
 = &\sum_{i=1}^{n} H(e_{i-1}, y_{i}|e^{i-2}, y^{i-1}, s^{i}) -  H(e_{i-1}, y_{i}|e^{i-2}, y^{i-1}, x_0, s^{i}) - I(y^n;x_0||s^n)+I(e^{n-1}\rightarrow y^n|| s^n)\\
\end{split}
\end{equation*}
We use chain rule on entropy $H(e_{i-1}, y_{i}|e^{i-2}, y^{i-1}, s^{i}) $  and $H(e_{i-1}, y_{i}|e^{i-2}, y^{i-1}, x_0, s^{i})$, and based on the fact of the feedback channel $p(e_i| e^{i-1}, y^i)$, we have 
\begin{equation*}
\begin{split}
& I(e^{n-1}, y^{n}; x_0||s^n) - I(y^n;x_0||s^n)+ I(e^{n-1}\rightarrow y^n|| s^n)\\
 = &\sum_{i=1}^{n} H( y_{i}|e^{i-1}, y^{i-1}, s^{i}) + H(e_{i-1}|e^{i-2}, y^{i-1}, s^i)-   H( y_{i}|e^{i-1}, y^{i-1}, x_0,s^i)\\
 & - H(e_{i-1}|e^{i-2}, y^{i-1}, x_0,s^i) - I(y^n;x_0||s^n)+I(e^{n-1}\rightarrow y^n|| s^n)\\
=& \sum_{i=1}^{n} H( y_{i}|e^{i-1}, y^{i-1}, s^i) + H(e_{i-1}|e^{i-2}, y^{i-1},s^i)-   H( y_{i}|e^{i-1}, y^{i-1}, x_0, s^i)\\
& - H(e_{i-1}|e^{i-2}, y^{i-1}, s^i) - I(y^n;x_0||s^n)+I(e^{n-1}\rightarrow y^n|| s^n)\\
 = &\sum_{i=1}^{n} H( y_{i}|e^{i-1}, y^{i-1}, s^i) - H( y_{i}|e^{i-1}, y^{i-1}, x_0, s^i) - I(y^n;x_0||s^n)+I(e^{n-1}\rightarrow y^n|| s^n)\\
\end{split}
\end{equation*}
Based on the fact that the channel inputs $x^n$ is determined by $x_0, e^{i-1}$ and $s^i$, the derivation continues as 
\begin{equation*}
\begin{split}
 = &\sum_{i=1}^{n} H( y_{i}|e^{i-1}, y^{i-1}, s^i) - H( y_{i}|x^{i}(e^{i-1},x_0), e^{i-1}, y^{i-1}, x_0) - I(y^n;x_0||s^n)+I(e^{n-1}\rightarrow y^n|| s^n)\\
 \stackrel{(b)}= &\sum_{i=1}^{n} H( y_{i}|e^{i-1}, y^{i-1}, s^i) - H( y_{i}|x^{i}, y^{i-1}, x_0, s^i) - H(y^n||s^n) + H(y^n||x_0, s^n)+I(e^{n-1}\rightarrow y^n|| s^n)\\
 \end{split}
\end{equation*}

Now it is straightforward to have that 
\begin{equation*}
\begin{split}
I(e^{n-1}\rightarrow y^n|| s^n) = H(y^n||s^n) - H(y^n||e^{n-1}, s^n).
\end{split}
\end{equation*}
Then 
\begin{equation*}
\begin{split}
& I(e^{n-1}, y^{n}; x_0||s^n) - I(y^n;x_0||s^n)+ I(e^{n-1}\rightarrow y^n|| s^n)\\
 = &\sum_{i=1}^{n} H( y_{i}|e^{i-1}, y^{i-1}, s^i) - H( y_{i}|x^{i}, y^{i-1}, x_0, s^i) - H(y^n||s^n) + H(y^n||x_0, s^n)+H(y^n||s^n) - H(y^n||e^{n-1}, s^n)\\
= &\sum_{i=1}^{n} H( y_{i}|e^{i-1}, y^{i-1}, s^i) - H( y_{i}|x^{i}, y^{i-1}, x_0, s^i) - H(y^n||s^n) + H(y^n||x_0, s^n)+H(y^n||s^n) - \sum_{i=1}^{n} H( y_{i}|e^{i-1}, y^{i-1}, s^i)\\
= &\sum_{i=1}^{n} H(y^n||x_0, s^n) - H( y_{i}|x^{i}, y^{i-1}, x_0, s^i) \\
= &\sum_{i=1}^{n} H(y_i||y^{i-1}, x_0, s^i) - H( y_{i}|x^{i}, y^{i-1}, x_0, s^i) \\
= &\sum_{i=1}^{n} I(x^i, y_i | y^{i-1}, x_0, s^i) \\
= & I(x^n\rightarrow y^n || x_0, s^n) \\
\end{split}
\end{equation*}
Next, we have a chain rule as 
\begin{equation*}
\begin{split}
&I(e^{n-1};x_0|y^n||s^n) = I(e^{n-1}, y^{n}; x_0||s^n) - I(y^n;x_0||s^n)\\
\end{split}
\end{equation*}
Putting above together, we have 
\begin{equation*}
\begin{split}
I(x^n\rightarrow y^n || x_0, s^n) = I(e^{n-1};x_0|y^n||s^n) +I(e^{n-1}\rightarrow y^n|| s^n)
\end{split}
\end{equation*}
Therefore, 
\begin{equation*}
I(x^n\rightarrow y^n||s^n) = I(x_0;y^n||s^n) + I(e^{n-1};x_0|y^n||s^n)+ I(e^{n-1}\rightarrow y^n|| s^n)
\end{equation*}
\end{proof}

Theorem 1 can  be interpreted as a law of conservation of information flows. It says that the information delivered in the forward channel equals to the sum of two information quantities respectively delivering the message index and the corrupted feedback information. Note that the quantity $I(e^{n-1};x_0|y^n)$ captures the mutual interference between the delivering of the message and the feedback information.
When the forward channel is independent from states, that is, 
\begin{equation*}
p(y_i| y^{i-1}, x^i, s^i) = p(y_i| y^{i-1}, x^i), 
\end{equation*}
Theorem 1 can be simplified as 
\begin{equation*}
I(x^n\rightarrow y^n) = I(x_0;y^n) + I(e^{n-1};x_0|y^n)+ I(e^{n-1}\rightarrow y^n)
\end{equation*}

\section{Conclusion}
In this technical note, we derive an information identity for state-dependent communication channels with corrupted feedback. Future work will focus on its applications in characterizing the capacity and deriving feedback channel codes.

\bibliographystyle{IEEEtran}
\bibliography{ref}

\end{document}